\newtheorem{theorem}{Theorem}[section]
\newtheorem{lemma}[theorem]{Lemma}
\def\IC{{\mathbb C}}
\def\IR{{\mathbb R}}
\def\cS{{\mathcal S}}
\def\cU{{\mathcal U}}
\def\cP{{\mathcal P}}
\def\({\left (}
\def\){\right )}
\def\Diag{{\rm Diag}\,}
\def\tr{{\rm tr}}
\def\rank{{\rm rank}\,}
\def\span{{\rm span}\,}
\def\conv{{\rm conv}\,}
\def\Id{{\rm Id}}
\def\H{H_{n_1\cdots n_m}}
\def\M{M_{n_1\cdots n_m}}
\def\A{A_1\otimes\cdots\otimes A_m}
\def\m{\widetilde{m}}
\def\n{\widetilde{n}}
\begin{document}
\openup .8\jot


\title[Linear preservers and quantum information science]
      {Linear preservers and \\ quantum information science}

\author{Ajda Fo\v sner}
\author{Zejun Haung}
\author{Chi-Kwong Li}
\author{Nung-Sing Sze}
\address{Ajda Fo\v sner,
Faculty of Management, University of Primorska,
Cankarjeva 5, SI-6104 Koper, Slovenia}
\email{ajda.fosner@fm-kp.si}
\address{Zejun Haung,
Department of Applied Mathematics,
The Hong Kong Polytechnic University,
Hung Hom, Hong Kong}
\email{huangzejun@yahoo.cn}
\address{Chi-Kwong Li,
Department of Mathematics, College of William and Mary, Williamsburg, VA 23187, USA;
Department of Mathematics,
University of Hong Kong, Pokfulam, Hong Kong}
\email{ckli@math.wm.edu}
\address{Nung-Sing Sze,
Department of Applied Mathematics,
The Hong Kong Polytechnic University,
Hung Hom, Hong Kong}
\email{raymond.sze@polyu.edu.hk}

\date{}

\maketitle

\centerline{\em Dedicated to Professor Pjek-Hwee Lee on the occasion of his retirement.}

\begin{abstract}
In this paper, a brief survey of recent results on linear preserver problems
and quantum information science is given. In addition, characterization is
obtained for linear operators $\phi$ on $mn \times mn$ Hermitian matrices
such that $\phi(A\otimes B)$ and $A\otimes B$ have the same spectrum
for any $m\times m$ Hermitian $A$ and $n\times n$ Hermitian $B$. Such a map
has the form $A\otimes B \mapsto  U(\varphi_1(A) \otimes  \varphi_2(B))U^*$
for $mn\times mn$ Hermitian matrices in tensor form $A\otimes B$, where
$U$ is  a unitary matrix, and for $j \in \{1,2\}$,
$\varphi_j$ is the identity map $X \mapsto X$ or the transposition map $X \mapsto X^t$.
The structure of linear maps leaving invariant the spectral radius
of matrices in tensor form $A\otimes B$ is also obtained. The results are connected
bipartite (quantum) systems and are extended to multipartite systems.
\end{abstract}
\medskip
{\em 2010 Math. Subj. Class.}: 15A69, 15A86, 15B57, 15A18.

{\em Keywords}: Hermitian matrix, linear preserver, spectrum, spectral radius, tensor state.

\section{Introduction}

The study of linear preserver problems has a long history.
It concerns the characterization of linear maps on matrices or operators with special
properties. For example, Frobenius \cite{F} showed that a linear operator
$\phi: M_n\rightarrow M_n$ satisfies
$$\det(\phi(A)) = \det(A) \qquad \hbox{ for all } A \in M_n$$
if and only if  there are $M, N\in M_n$ with $\det(MN) = 1$ such that
$\phi$ has the form
\begin{equation} \label{standard}
A \mapsto MAN \quad \hbox{ or } \quad A \mapsto MA^tN,
\end{equation}
where $M_n$ denotes the set of $n\times n$ complex matrices.
Clearly, a map of the form (\ref{standard}) is linear and leaves the determinant
function invariant. It is interesting that a linear map preserving the determinant
function must be of this form. In \cite{D}  
Dieudonn\'{e} showed that an invertible linear operator $\phi: M_n \rightarrow M_n$
maps the set of singular matrices into itself if and only if there are invertible 
$M,N \in M_n$
such that $\phi$ has the form (\ref{standard}).
One may see \cite{LP} and its references for results on linear preserver problems.
There are many new directions and active research on preserver problems motivated
by theory and applications; see \cite{B,M7,W}.

In this paper, we focus on linear preserver problems related to quantum information
science.  In Section 2, we briefly survey some recent results on such research,
and motivate our study in Section 3, in which we characterize linear preservers
of the spectral radius or the spectrum of the tensor product of two Hermitian matrices, and
discuss the implications of the result to bipartite quantum systems.
The results are extended to the tensor product of $m$ Hermitian matrices
with $m > 2$ corresponding to the multipartite quantum systems.
Additional remarks, results and open problems are also presented.

\section{Quantum information science and preservers}

Let $H_n$ be the set of Hermitian matrices in $M_n$.
In quantum physics, {\em quantum states} of a system with $n$
physical states are represented as
{\rm density matrices} $A$ in $H_n$, i.e., $A$ is
positive semi-definite with trace one.
Rank one orthogonal projections are pure states.

The classical Wigner's theorem in quantum mechanics 
asserts that a bijective map $\phi$ on the set of pure states
satisfying $\tr(AB) = \tr(\phi(A)\phi(B))$ must be of the form
\begin{equation}\label{standard2}
A \mapsto UAU^* \qquad \hbox{ or } \qquad A \mapsto UA^tU^*
\end{equation}
for some unitary operator $U$. 
Uhlhorn \cite{U} showed that a bijective map
$\phi$ on the set of pure states also has the form (\ref{standard2})
under the weaker assumption that  $\tr(AB) = 0$ if and only if $\tr \phi(A)\phi(B)) = 0$.  
The result was extended to  
Hilbert modules over matrix algebras, prime C*-algebras, and indefinite inner
product spaces; see
\cite{M1,M4}. In \cite{LPSe}, the authors extended Uhlhorn's result to
Hermitian matrices, symmetric matrices, the set of
orthogonal projections, the set of rank one orthogonal projections,
and the set of effect algebra, and studied bijective maps on these matrix sets
such that 
$$\tr(AB) = c \qquad\hbox{ if and only if } \qquad \tr(\phi(A)\phi(B)) = c$$
for a given $c > 0$.

In a series of interesting papers \cite{M2,M3,M4,M6,MB}, Moln\'{a}r and his collaborators
characterized bijective maps on the set of complex matrices, Hermitian matrices,
bounded observables,  effect algebra, etc. preserving special subsets or relations.
In many cases, the map has the form (\ref{standard2}).
One may see also \cite{M7} for additional results along this direction.

Suppose $A \in H_m$ and $B \in H_n$ are the states of two quantum systems.
Then the {\em tensor (Kronecker) state}  $A\otimes B \in H_{mn}$ 
describes the joint (bipartite) system.
A density matrix $C \in H_{mn}$ is {\em separable} if it is the convex combination of
tensor states, i.e., $C = \sum_{j=1}^r t_j A_j\otimes B_j$ for some positive
numbers $t_1, \dots, t_r$ summing up to one, and tensor states $A_1\otimes B_1,
\dots, A_r \otimes B_r$. Otherwise, $C$ is {\em entangled}. Identifying
separable states in $H_{mn}$ is an NP-hard problem; see \cite{G}.
Nevertheless, there is of interest in finding easy ways to check
necessary or sufficient conditions
of separability of states. 
In particular, it is interesting to find transformations which will simplify a given
state so that it is easier to determine whether it is separable or not. Evidently, 
the transformations used should not change the set of separable states.  This leads to
the study of linear operators leaving invariant the set of separable states (entangled states).
Similar definitions and questions can be considered for multipartite systems.
The following result was proved in \cite{FLPS}.

\begin{theorem} Let $n_1, \dots, n_m \in \{2, 3, \dots\}$ and $N = \prod_{j=1}^m n_j$.
Suppose $\cS$ is one of the following.

\medskip
{\rm (a)} The set of tensor product (of pure) states
$A_1 \otimes \cdots \otimes A_m$, where  $A_j \in H_{n_j}$ is a (pure) state
for each $j \in \{1, \dots, m\}$.

\medskip
{\rm (b)} The set of separable states in $H_N$, viz,
the convex hull of the set of tensor product (of pure) states.

\medskip\noindent
Then a linear map $\phi: H_{N} \rightarrow H_{N}$ satisfies $\phi(\cS) = \cS$
if and only if there is a permutation $(p_1, \dots, p_m)$ of $(1, \dots, m)$ such that
$$A_1 \otimes \cdots \otimes A_m \mapsto \psi_1(A_{p_1}) \otimes \cdots \otimes \psi_m(A_{p_m}),$$
where for each $j \in \{1, \dots, m\}$,
$n_j = n_{p_j}$ and $\psi_j: M_{n_j} \rightarrow M_{n_j}$ is a linear map
of the form
$$X \mapsto U_jXU_j^* \qquad \hbox{ or } \qquad X \mapsto U_jX^tU_j^*$$
for a unitary $U_j \in M_{n_j}$.
\end{theorem}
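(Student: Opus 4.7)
The plan is to first verify the ``if'' direction and then concentrate on the converse. The ``if'' direction is an immediate check: each $\psi_j$ of the form $X \mapsto U_j X U_j^*$ or $X \mapsto U_j X^t U_j^*$ sends rank-one projections of $H_{n_j}$ to rank-one projections, and the matching condition $n_j = n_{p_j}$ ensures the tensor factors have compatible sizes, so pure tensor states go to pure tensor states (and separable states to separable states by linearity). For the converse, I would first reduce case (b) to case (a): the separable states form a compact convex subset of $H_N$ whose extreme points are exactly the pure tensor states of (a), so a linear bijection preserving the convex set preserves its extreme set. Since each $H_{n_j}$ is real-spanned by its rank-one projections, the pure tensor states span $H_{n_1}\otimes\cdots\otimes H_{n_m} = H_N$. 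Combined with $\phi(\cS) = \cS$, this forces $\phi$ to be a linear bijection of $H_N$ that preserves the cone of positive separable operators (in particular, $\phi$ preserves the trace since pure tensor states all have trace one).

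Next I would establish preservation of orthogonality of pure tensor states. If $P = A_1 \otimes \cdots \otimes A_m$ and $Q = B_1 \otimes \cdots \otimes B_m$ are pure tensor states with $\tr(PQ) = \prod_j \tr(A_j B_j) = 0$, then $P$ and $Q$ are orthogonal rank-one projections and $P + Q$ is a rank-two projection whose minimal face in the separable cone contains $P$ and $Q$ as its only extreme rays. The images $\phi(P), \phi(Q)$ are again pure tensor states, and since $\phi$ preserves the facial structure of the separable cone, $\phi(P) + \phi(Q)$ must similarly have exactly two extreme rays in its minimal face; this forces $\phi(P)$ and $\phi(Q)$ to be orthogonal. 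With orthogonality preservation in hand, I would then fix rank-one projections $R_k \in H_{n_k}$ for $k \neq j$ and examine the slice map $R \mapsto \phi(R_1 \otimes \cdots \otimes R \otimes \cdots \otimes R_m)$, where $R$ varies over rank-one projections of $H_{n_j}$. Using the fact that two pure tensor states are orthogonal iff some factor pair is, a combinatorial argument shows all but one output slot of the image is constant, and the varying slot is always a fixed index $p_j$ with $n_j = n_{p_j}$; checking consistency over different choices of fixed $R_k$'s shows $(p_1, \ldots, p_m)$ is a genuine permutation.

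The induced bijection from rank-one projections of $H_{n_j}$ onto those of $H_{n_{p_j}}$ preserves orthogonality, so the classical Wigner--Uhlhorn theorem supplies a unitary or antiunitary operator implementing it, yielding $\psi_j$ of the stated form $X \mapsto U_j X U_j^*$ or $X \mapsto U_j X^t U_j^*$. Assembling the $\psi_j$'s together with the permutation $(p_1, \ldots, p_m)$ recovers the action of $\phi$ on pure tensor states, and by linearity on all of $H_N$. The main obstacle is the decoupling step in the second paragraph: proving that varying a single input tensor slot moves only one output slot, and that the chosen slot $p_j$ is independent of the fixed $R_k$'s. This mixes orthogonality preservation with a combinatorial/dimensional argument on the product structure $\prod_j \{\text{rank-one projections in } H_{n_j}\}$, and is where the tensor geometry (as opposed to the ambient rank-one projections of $H_N$) is used in an essential way.
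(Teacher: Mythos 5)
First, a remark on the ground truth: the paper states this theorem as a quoted result from \cite{FLPS} and supplies no proof of it, so there is no in-paper argument to compare yours against; I am judging your sketch on its own terms. The ``if'' direction, the reduction of case (b) to case (a) via extreme points of the separable set, and the observation that $\phi$ is forced to be a linear bijection because the pure product states span $H_N$, are all correct.

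The genuine gap is the orthogonality-preservation step: the facial invariant you propose does not characterize orthogonality, and it fails in both directions. Take $m=2$ and unit vectors as follows. If $P = xx^*\otimes zz^*$ and $Q = yy^*\otimes zz^*$ with $x\perp y$, then $\tr(PQ)=0$, but the minimal face of the separable cone containing $P+Q=(xx^*+yy^*)\otimes zz^*$ contains every state $vv^*\otimes zz^*$ with $v\in\span\{x,y\}$ (since $(xx^*+yy^*-\epsilon vv^*)\otimes zz^*$ is separable for small $\epsilon>0$), so that face has a continuum of extreme rays, not two. Conversely, if $P=x_1x_1^*\otimes x_2x_2^*$ and $Q=y_1y_1^*\otimes y_2y_2^*$ with $\{x_j,y_j\}$ linearly independent and $\langle x_j,y_j\rangle\neq 0$ for both $j$, then $P\not\perp Q$, yet any separable $X$ with $0\le \epsilon X\le P+Q$ has range inside $\span\{x_1\otimes x_2,\, y_1\otimes y_2\}$, and the only product vectors in that plane are multiples of the two generators (a combination $\alpha\, x_1\otimes x_2+\beta\, y_1\otimes y_2$ with $\alpha\beta\neq 0$ has Schmidt rank two); hence the minimal face of $P+Q$ is exactly the two-dimensional cone on $P$ and $Q$. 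What your invariant actually detects is whether $P$ and $Q$ differ in exactly one tensor factor (large face) or in at least two (exactly two extreme rays) --- an adjacency relation, not orthogonality. Since your decoupling argument and the appeal to Wigner--Uhlhorn both rest on orthogonality preservation, the proof does not go through as written. The natural repair is to use the invariant for what it really gives, namely preservation of adjacency on pure product states, and then invoke the structure of adjacency preservers on products of projective spaces (Westwick-type theorems), which is essentially the route of the cited source. Note also that even if orthogonality preservation were secured, Uhlhorn's theorem needs dimension at least three, so the factors with $n_j=2$ would have to be handled using the linearity of the induced map $\psi_j$ rather than orthogonality alone.
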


The result was generalized in three directions by researchers.
First, Hou and his associates \cite{hou}
extended the result to the infinite dimensional setting and characterized
bounded invertible linear maps leaving invariant the set of tensor product of
rank one orthogonal projections acting on infinite dimensional Hilbert spaces,
or its convex hull, i.e., the set of separable states. Second, Lim \cite{Lim}
characterized linear map $\phi: H_{n_1} \otimes \cdots \otimes H_{n_m} \rightarrow
H_{\tilde n_1} \otimes \cdots \otimes H_{\tilde n_m}$ such that $\phi$ maps the set of
tensor (separable) states in the domain into the set of tensor (separable) states
in the codomain. Third, the authors in \cite{LPS} characterize linear map
$\phi: H_{n_1} \otimes \cdots \otimes H_{n_m} \rightarrow
H_{n_1} \otimes \cdots \otimes H_{n_m}$
such that $\phi(\cS_1) = \cS_2$, where
$$\cS_1 = \{ X_1 \otimes \cdots \otimes X_m: X_j \in \cU(C_j), \ j = 1, \dots, m\}$$
and
$$\cS_2 = \{ Y_1 \otimes \cdots \otimes Y_m: Y_j \in \cU(D_j), \ j = 1, \dots, m\}$$
for given states
$C_j, D_j \in H_{n_j}$ with $j = 1, \dots, m$ and
$$\cU(X) = \{U^*XU: U \hbox{ unitary} \}$$
is the unitary (similarity) orbit of $X$. When $C_i$ and $D_i$ are pure states,
the study reduces to the problem treated in \cite{FLPS}, and reveals the fact that
there are linear transformations converting a unitary orbit to a different unitary orbit.

In \cite{J}, the author  showed a number of interesting linear preserver results related
to quantum information science. A vector state of a quantum system with $m$
measurable physical states can be represented as a unit vector $u$ in $\IC^m$.
A product state of two vector states $u \in \IC^m$ and $v \in \IC^n$
is the tensor product $u\otimes v \in \IC^{mn}$, and unit vectors in $\IC^{mn}$ can
be viewed as vector states in the bipartite system with $\IC^m$ and $\IC^n$ as
components.
Every vector $w \in \IC^{mn}$ can be identified
with an $m\times n$ matrix $[w]$ by putting the first $n$ entries in the first row,
the next $n$ entries in the second row, etc. In particular, $u \otimes v$ can be
identify with the matrix $uv^t$.  The singular value decomposition of
the matrix $[w] = \sum_{j=1}^k s_j u_jv_j^t$
corresponds to the Schmidt decomposition $w = \sum_{j=1}^k s_j u_j \otimes v_j$.
The Schmidt rank of a vector (state) $w$ is the rank of the matrix $[w]$.
Clearly, the linear span of product states $u\otimes v$
will generate all the vectors in $\IC^{mn}$, and a linear map $L$ on $\IC^{mn}$ is
completely determined once we know $L(u\otimes v)$ for all (or $mn$ linearly independent)
product states $u\otimes v$.
In \cite{J}, the author used some classical
results on linear preservers to study maps preserving $\cP_k$,
the set of all states with Schmidt rank at most $k$
for a given $k \le \min\{m,n\}$. In particular, it was shown that
an invertible linear map $L:\IC^{mn}\rightarrow \IC^{mn}$
satisfies $L(\cP_k) \subseteq \cP_k$
if and only if there are unitary matrices $P \in M_m$ and $Q \in M_n$
such that one of the following holds.

\medskip
(a) $L(u \otimes v) =  Pu \otimes Qv $ for all $(u,v) \in \IC^m\times \IC^n$.

(b) $m=n$ and $L(u\otimes v) = Qv \otimes Pu$ for all $(u,v) \in \IC^m\times \IC^n$.

\medskip\noindent
Suppose $\cS_k$ is the set of all vectors $w\in \IC^{mn}$ with Schmidt rank at most $k$.
Then an invertible linear map $L:\IC^{mn}\rightarrow \IC^{mn}$
satisfies $L(\cS_k) \subseteq \cS_k$
if and only if there are invertible matrices $P \in M_m$ and $Q \in M_n$
such that (a) or (b) holds.

\medskip
Another result in \cite{J} asserts that an invertible linear map $\Phi: M_{mn} \rightarrow M_{mn}$
satisfies $\Phi(\cS) \subseteq \cS$, where $\cS$ is the set of rank one matrices of
the form $uv^t$ such that $u$ and $v$ have Schmidt rank at most $k$
if and only if $\Phi$ is a composition of
one or more of the following maps.

\medskip

(1) The transpose map $X \mapsto X^t$.

(2) $X \mapsto (P_1\otimes Q_1)X(P_2\otimes Q_2)$ for some invertible matrices $P_i \in M_m$
and $Q_i \in M_n$ for $i = 1,2$.

(3) $k = 1$, the partial transpose map $\left[ X_{ij} \right]_{1\le i,j \le m}
\mapsto [ X_{ij}^t ]_{1 \le i,j \le m}$, where $X_{ij} \in M_n$.

\medskip
Furthermore, Johnston considered the norm on $\IC^{mn}$ defined by
$$\|u\|_{k} = \max\{ |v^*u|: v \in \IC^{mn},\ v^*v = 1,\ \rank([v]) \le k \} = \left\{\sum_{j=1}^k s_j^2\right\}^{1/2},$$
where  $s_1 \ge s_2 \ge \cdots $ are the singular values of $[u]$, for any $k \le \min \{m,n\}$.
He also considered the norm on $M_{mn}$ defined by
$$|||C|||_k = \max \left\{ |u^*Cv|: u, v \in \IC^{mn},\ u^*u = v^*v = 1,\ \rank([u]) \le k, \
\rank([v]) \le k \right\}.$$
These norms have recently been studied in \cite{5,9,10,11,18} and were shown to be related to
the problem of characterizing $k$-positive linear maps and detecting bound entangled
non-positive partial transpose states.

\medskip
In connection to the preserver problems, it was shown that a linear map
$L: \IC^{mn} \rightarrow \IC^{mn}$ satisfies
$$\|L(u)\|_{k} = \|u\|_k \qquad \hbox{ for all } u \in \IC^{mn}$$
if and only if there are unitary $P \in M_m$ and $Q \in M_n$ such that
condition (a) or (b) mentioned above holds.

\medskip
If $k = \min\{m,n\}$ one sees that $|||C|||_k$ is just the operator norm. It is known that
a linear preserver on $M_{mn}$ of the operator norm has the form
$$X \mapsto UXV \quad \hbox{ or } \quad X \mapsto UX^tV$$
for some unitary $U, V \in M_{mn}$.
For $k < \min\{m,n\}$, Johnston showed that a linear map $\Phi: M_{mn} \rightarrow M_{mn}$
satisfies
$$|||\Phi(X)|||_k = |||X|||_k \quad \hbox{ for all } X \in M_{mn}$$
if and only if $\Phi$ is a composition of
one or more of the maps described in (1), (2) or (3) above with the additional
restriction that $P$ and $Q$ in (2) are unitary.

\medskip
Many of the above results are extended to multi-partite system, e.g., \cite{FLPS,J,LPS, Lim}.

\medskip
Next, we consider another line of research in preserver problems.
There has been considerable interest in studying spectrum preserving maps (see
\cite{CLS,Sourour,Marcus} etc).
On Hermitian matrices, it is known that a linear map on $H_n$ that leaves invariant
the spectrum has the form
$$A \mapsto UAU^* \quad \hbox{ or } \quad A \mapsto UA^tU^*$$
for some unitary $U \in M_n$. If one gives up the Hermitian preserving property
and considers a (complex) linear operator
$\phi: M_n \rightarrow M_n$ that leaves invariant the eigenvalues
of Hermitian matrices,  then $\phi$ has the form
\begin{equation} \label{sas}
A \mapsto SAS^{-1} \quad \hbox{ or } \quad A \mapsto SA^tS^{-1}
\end{equation}
for some invertible $S \in M_n$.

In \cite{Saitoh1,Saitoh}, the authors studied non-classical correlation in a bipartite
systems and showed that for any spectrum preserving linear
map $\phi: H_n \rightarrow M_n$,
either
$$\sigma((\Id_m\otimes \phi)(C))= \sigma(C) \quad \hbox{ for all }\quad C \in H_m\otimes H_n,$$
or
$$\sigma((\Id_m\otimes \phi)(C) = \sigma({\rm PT}_2(C)) \quad \hbox{ for all }\quad C \in H_m\otimes H_n,$$
where ${\rm PT}_2(A\otimes B) = A \otimes B^t$
is the partial transpose map for the second component and $\Id_m$ is the identity map on $m\times m$ matrices.

Following this line of study, we consider linear operators leaving invariant
the spectrum of tensor states and related problems in the next section.
It turns out that even if one assumes only that a linear operator $\phi$
leaves invariant the spectrum of matrices in tensor form $A\otimes B\in H_m\otimes H_n$,
the operator $\phi$ has a nice structure, namely,
up to a unitary similarity, $\phi$ has the form
$A\otimes B \mapsto \psi_1(A) \otimes \psi_2(B)$ for all tensor states $A\otimes B$,
where $\psi_j$ is the identity map $X\mapsto X$ or the transposition map $X \mapsto X^t$.
Moreover, if $\sigma(C) = \sigma(\phi(C))$ for a carefully chosen $C \in H_{mn}$,
then $\phi$ will actually preserve the spectrum of every matrix in $H_{mn}$, and will
be of the form $X \mapsto VXV^*$ or $X \mapsto VX^tV^*$ on $H_{mn}$ for some unitary
matrix $V \in H_{mn}$. Similar results are obtained for linear maps leaving invariant
the spectral radius of tensor states $A\otimes B$ in $H_m\otimes H_n$.

\section{Preservers of spectral radius or spectrum}

Suppose $A\in H_m$ has eigenvalues $a_1 \ge \cdots \ge a_m$ associated with orthonormal
eigenvectors $x_1, \dots, x_m$, and $B\in H_n$ has eigenvalues
$b_1 \ge \dots \ge b_n$ associated with orthonormal eigenvectors
$y_1, \dots, y_n$, then  $A\otimes B$ has eigenvalues $a_rb_s$ associated with
eigenvectors $x_r\otimes y_s$ for $(r,s) \in \{1, \dots, m\}\times \{1, \dots, n\}$.
Denote by $\sigma(X)$ and $r(X)$ the spectrum and spectral radius of
a matrix $X \in M_n$.
In Subsection 3.1, we show that a linear map
$\phi: H_m \otimes H_n \rightarrow H_m\otimes H_n$ satisfies
$$\sigma(\phi(A\otimes B)) = \sigma(A\otimes B)$$
for all $A\otimes B \in H_m \otimes H_n$ if and only if
there is a unitary $U\in M_{mn}$ such that
\begin{equation}\label{S1}
A\otimes B \mapsto U( \varphi_1(A) \otimes \varphi_2(B) )U^*,
\end{equation}
where $\varphi_j$, $j=1,2$, is either the identity map or
the transposition map $X \mapsto X^t$ (see Theorem \ref{2.1}).
Furthermore, we will also show that a linear map on $H_{mn}$ leaving the spectral radius of tensor states invariant, i.e.,
$$r(\phi(A\otimes B)) = r(A\otimes B)$$ for all $A\otimes B \in H_m \otimes H_n$,
is $\pm 1$ multiple of a map of the standard form (\ref{S1}) (see Theorem \ref{2.2}).
In Subsection 3.2,
we will extend the results to multipartite systems (Theorem \ref{3.1} and Theorem \ref{3.2}).
Additional remarks, results, and open problems  will be presented in Subsection 3.3.

\subsection{Bipartite system}
Throughout this paper, we denote by $E_{ij}, 1\leq i,j\leq n$ the standard basis of $M_n$.
We need the following lemma.

\begin{lemma} \label{le3.1}
Let $m>n$ and $A\in H_m$ with  $\sigma(A)=\{a_1,\ldots, a_n,0,\ldots,0\}$. If
$$\sigma(A+t(I_n\oplus 0_{m-n}))=\{a_1+t,\ldots,a_n+t,0,\ldots,0\} \textrm{ for all } t\in \mathbb{R},$$
then $A=B\oplus 0_{m-n}$ for some $B\in H_n$.
\end{lemma}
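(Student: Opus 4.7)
The plan is to use the block decomposition of $A$ conforming to $P := I_n \oplus 0_{m-n}$ and to extract the vanishing of the off-diagonal and bottom-right blocks from the trace identity for $(A+tP)^3$. Specifically, write
\begin{equation*}
A \;=\; \begin{pmatrix} A_{11} & A_{12} \\ A_{12}^* & A_{22} \end{pmatrix}, \qquad A_{11} \in H_n,\ A_{22} \in H_{m-n},
\end{equation*}
so that the conclusion $A = B \oplus 0_{m-n}$ is equivalent to $A_{12} = 0$ and $A_{22} = 0$.

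The hypothesis says that the eigenvalues of $A + tP$ are $a_1+t,\ldots,a_n+t$ together with $m-n$ zeros for every $t \in \IR$. Hence for each fixed $k$ the polynomial identity
\begin{equation*}
\tr\bigl((A+tP)^k\bigr) \;=\; \sum_{i=1}^{n} (a_i+t)^k
\end{equation*}
holds as polynomials in $t$, and I will compare coefficients for $k=3$. Using $P^2=P$ and cyclicity of the trace, the expansion of $(A+tP)^3$ yields
\begin{equation*}
\tr\bigl((A+tP)^3\bigr) \;=\; \tr(A^3) \;+\; 3t\,\tr(A^2P) \;+\; 3t^2\,\tr(AP) \;+\; nt^3 ,
\end{equation*}
which must equal $\sum_{i=1}^n a_i^3 + 3t\sum a_i^2 + 3t^2 \sum a_i + nt^3$. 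The essential identity coming from the coefficient of $t$ is
\begin{equation*}
\tr(A^2 P) \;=\; \sum_{i=1}^n a_i^2 \;=\; \tr(A^2).
\end{equation*}

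Now I compute both sides from the block form. Since $A^2$ has $(1,1)$-block $A_{11}^2 + A_{12}A_{12}^*$ and $(2,2)$-block $A_{12}^*A_{12} + A_{22}^2$,
\begin{equation*}
\tr(A^2 P) \;=\; \tr(A_{11}^2) + \|A_{12}\|_F^2, \qquad \tr(A^2) \;=\; \tr(A_{11}^2) + 2\|A_{12}\|_F^2 + \tr(A_{22}^2).
\end{equation*}
Substituting into the identity above gives $\|A_{12}\|_F^2 + \tr(A_{22}^2) = 0$. Because $A_{22}$ is Hermitian, $\tr(A_{22}^2) = \|A_{22}\|_F^2 \geq 0$, and $\|A_{12}\|_F^2 \geq 0$, so both terms must vanish, yielding $A_{12}=0$ and $A_{22}=0$, and hence $A = A_{11} \oplus 0_{m-n}$.

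There is no real obstacle here beyond careful trace bookkeeping; the only subtle point is choosing the right power ($k=3$), since the $k=2$ identity merely gives $\tr(A_{22})=0$ but not the full vanishing of $A_{22}$. One could alternatively argue via Schur complements applied to $A+tP$ for large $t$, using that $\rank(A+tP)\leq n$ forces $A_{22} - A_{12}^*(A_{11}+tI_n)^{-1}A_{12} = 0$ and letting $t\to\infty$, but the trace argument is more direct.
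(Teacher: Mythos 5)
Your proof is correct, and it takes a genuinely different route from the paper's. The paper shifts $A$ by a large multiple of $P=I_n\oplus 0_{m-n}$ to obtain a positive semi-definite matrix $C$ with the same block structure to be determined, observes that for a unit vector $v$ in the span of $e_{n+1},\dots,e_m$ the quadratic form $v^*(C+tP)v=v^*Cv$ is independent of $t$ yet must lie in the convex hull of the eigenvalues $\{c_1+t,\dots,c_n+t,0\}$ for every $t$, forces $v^*Cv=0$ by letting $t\to-\infty$, and then uses positive semi-definiteness to conclude $Cv=0$; this is a variational/positivity argument. You instead extract a single algebraic identity, $\tr(A^2P)=\tr(A^2)$, from the coefficient of $t$ in $\tr\bigl((A+tP)^3\bigr)=\sum_{i=1}^n(a_i+t)^3$, and read off $\tr(A_{12}A_{12}^*)+\tr(A_{22}^2)=0$ from the block form, whence $A_{12}=0$ and $A_{22}=0$ since both terms are sums of squares. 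Your computation checks out (the expansion of the cubic trace using $P^2=P$ and cyclicity is right, and you correctly note that the $k=2$ identity only yields $\tr(A_{22})=0$, so $k=3$ is the minimal useful power). What each approach buys: yours is purely algebraic, avoids the shift-to-PSD step and any limiting argument, and needs the hypothesis only as a polynomial identity in $t$ (so finitely many values of $t$ would suffice); the paper's argument is less computational and generalizes more readily to settings where one reasons about numerical ranges or quadratic forms rather than traces of powers. Both hinge on the same structural fact, namely that Hermitian symmetry converts a vanishing trace condition into the vanishing of the blocks themselves.
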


\begin{proof}
Choose a sufficient large $s\in \IR$ so that $C = A + s(I_n \oplus 0_{m-n})$
is positive semi-definite with eigenvalues $c_1,\dots,c_n,0,\dots,0$
where $c_j = a_j + s$, $j = 1,\dots, n$. Then
$$\sigma(C + t (I_n \oplus 0_{m-n})) = \sigma(A + (s+t)(I_n \oplus 0_{m-n}))
= \{c_1+t,\dots,c_n+t,0,\dots,0\}.$$
Denote by $\{e_1,\dots,e_m\}$ the standard basis of $\IC^m$.
Then for any unit vector $v \in \span \{e_{n+1},\dots,e_m\}$,
$$v^* C v
= v^* (C+t(I_n \oplus 0_{m-n})) v
\in \conv \{c_1 + t,\dots,c_n +t,0\}\quad \hbox{for all } \quad t\in \IR,$$
where $\conv S$ denote the convex hull of the set $S$.
Since this holds for all $t$ in $\IR$,
this is possible only when $v^*C v = 0$.
As $C$ is positive semi-definite,
$v$ is an eigenvector of $C$ with eigenvalue $0$.
As $v$ is arbitrary in $\span \{e_{n+1},\dots,e_m\}$,
$C$ must have the form $C_1 \oplus 0_{m-n}$. Hence,
$A = B \oplus 0_{m-n}$ with $B = C_1 - sI_n$.
\end{proof}

\begin{theorem}
\label{2.1}
A linear map $\phi: H_{mn}\rightarrow H_{mn}$ satisfies
$$\sigma(\phi(A\otimes B)) = \sigma(A\otimes B)$$
for all $A\otimes B \in H_m\otimes H_n$
if and only if there is a unitary $U \in M_{mn}$ such that
$$\phi(A\otimes B)=U(\varphi_1(A)\otimes \varphi_2(B))U^*,$$
where  $\varphi_j$ is the identity map or  the transposition map $X \mapsto X^t$
for $j \in \{1, 2\}$.
\end{theorem}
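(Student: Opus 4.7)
The ``if'' direction is clear since conjugation by a unitary and the transpose map preserve eigenvalues, and $\sigma(X\otimes Y)$ is the multiset of products $\{\lambda\mu:\lambda\in\sigma(X),\,\mu\in\sigma(Y)\}$. For the converse I would argue as follows.

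First, observe that $I_{mn}=I_m\otimes I_n$ has spectrum $\{1,\ldots,1\}$, forcing $\phi(I_{mn})=I_{mn}$, and that $\tr\circ\phi=\tr$ on $H_{mn}$, since these linear functionals agree on all tensor products $A\otimes B$ and such tensors span $H_{mn}$. Next, I would use Lemma \ref{le3.1} to localize $\phi$ along ``slices'' of the tensor product: for a unit vector $y\in\IC^n$, the matrix $I_m\otimes yy^*$ is a rank-$m$ projection, so $Q_y:=\phi(I_m\otimes yy^*)$ is a rank-$m$ projection too, and we may choose a unitary $U_y$ with $U_y^*Q_yU_y=I_m\oplus 0_{m(n-1)}$. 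For each $A\in H_m$ and $t\in\IR$, $\phi(A\otimes yy^*)+tQ_y=\phi((A+tI_m)\otimes yy^*)$ has the same spectrum as $(A+tI_m)\otimes yy^*$, which is the spectrum of $A$ shifted by $t$ padded with $m(n-1)$ zeros; Lemma \ref{le3.1} then yields $U_y^*\phi(A\otimes yy^*)U_y=F_y(A)\oplus 0_{m(n-1)}$ for a linear map $F_y:H_m\to H_m$ with $\sigma(F_y(A))=\sigma(A)$. By the classical spectrum-preserver theorem on $H_m$ quoted in Section 2, $F_y(A)=V_yAV_y^*$ or $V_yA^tV_y^*$ for some unitary $V_y\in M_m$. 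Swapping the roles of the two factors produces an analogous decomposition of $\phi(xx^*\otimes B)$ for each unit $x\in\IC^m$.

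To glue these slice decompositions into a tensor structure, fix orthonormal bases $\{x_i\}$ of $\IC^m$ and $\{y_j\}$ of $\IC^n$. The $mn$ rank-one projections $\phi(x_ix_i^*\otimes y_jy_j^*)$ sum to $\phi(I_{mn})=I_{mn}$, hence are mutually orthogonal and form a resolution of the identity. The range $L_{ij}$ of the $(i,j)$-th projection sits inside both the $m$-dimensional range of $Q_{y_j}$ and the $n$-dimensional range of the analogous $x$-slice projection $Q'_{x_i}$; a dimension count forces the range of $Q_{y_j}$ to decompose as $\bigoplus_iL_{ij}$ and the range of $Q'_{x_i}$ as $\bigoplus_jL_{ij}$. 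Choosing unit vectors $v_{ij}\in L_{ij}$, the unitary $U$ sending $e_i\otimes e_j$ to $v_{ij}$ can be absorbed into the $U$ of the statement; after this conjugation, $\phi(x_ix_i^*\otimes y_jy_j^*)=E_{ii}\otimes E_{jj}$ for the chosen bases. Feeding this back into the slice decompositions produces $\phi(A\otimes E_{jj})=(D_j\varphi_j(A)D_j^*)\otimes E_{jj}$ and $\phi(E_{ii}\otimes B)=E_{ii}\otimes(D'_i\psi_i(B)(D'_i)^*)$, with each of $\varphi_j,\psi_i$ being the identity or the transpose $X\mapsto X^t$ and $D_j,D'_i$ diagonal unitaries.

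The main obstacle I expect is the final consistency step: one must show that $\varphi_j$ is independent of $j$, $\psi_i$ is independent of $i$, that the two families of diagonal phases $D_j$ and $D'_i$ mesh, and that the two choices $\varphi_1,\varphi_2$ combine compatibly, so that $\phi(A\otimes B)=\varphi_1(A)\otimes\varphi_2(B)$ up to an outer diagonal unitary that is finally absorbed into the overall $U$. I would handle this by computing $\phi(xx^*\otimes yy^*)$ for generic $(x,y)$ in two ways, expanding $yy^*$ in the Hermitian basis $\{E_{jj},\,E_{jk}+E_{kj},\,i(E_{jk}-E_{kj})\}$ of $H_n$ and expanding $xx^*$ analogously, and then imposing that the result is a rank-one projection; the rank-one constraint forces the phases to align and the local identity/transpose choices to be globally coherent. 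An alternative finishing route is to verify that $\phi$ maps the set of pure tensor states into itself and then invoke the theorem from \cite{FLPS} cited in Section 2, which delivers the stated form directly, with the residual unitaries inside the $\psi_j$'s absorbed into the outer $U$.
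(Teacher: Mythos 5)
Your opening steps essentially reproduce the paper's own argument: $\phi(I_{mn})=I_{mn}$, the $mn$ images $\phi(x_ix_i^*\otimes y_jy_j^*)$ are rank-one projections summing to the identity and hence simultaneously alignable by a single unitary, and Lemma \ref{le3.1} together with the Marcus--Moyls theorem \cite{Marcus} gives the standard form on each slice (the paper slices as $E_{jj}\otimes B$ rather than $A\otimes yy^*$, which is immaterial). The genuine gap is exactly where you locate it --- the final coherence of the identity/transpose choices --- and neither of your proposed finishes closes it. Route (a) fails for a concrete reason: after the basis alignment, the data you have established consists of $\phi(A\otimes y_jy_j^*)$ for all $A\in H_m$ and $\phi(x_ix_i^*\otimes B)$ for all $B\in H_n$, i.e.\ the restriction of $\phi$ to $H_m\otimes \Delta_n$ and $\Delta_m\otimes H_n$, where $\Delta_k$ denotes the span of the diagonal matrices. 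These two subspaces together span only $mn(m+n-1)$ of the $m^2n^2$ dimensions of $H_{mn}$, so $\phi(F\otimes G)$ is completely undetermined whenever both $F$ and $G$ are off-diagonal Hermitian basis elements (e.g.\ $F=E_{12}+E_{21}\in H_m$ and $G=E_{12}+E_{21}\in H_n$). Expanding $xx^*\otimes yy^*$ in the product Hermitian basis therefore produces cross terms on which you cannot evaluate $\phi$, and the rank-one constraint cannot be imposed on an expression you cannot compute. Route (b) is circular as stated: to invoke the theorem of \cite{FLPS} you must first show that $\phi$ maps the set of pure product states onto itself, but the assertion that $\phi(xx^*\otimes yy^*)$ is a \emph{product} pure state, rather than an arbitrary (possibly entangled) rank-one projection, is precisely the coherence statement you are trying to prove; the cited theorem also requires surjectivity onto that set.

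The paper closes this gap differently, and you would need some version of its mechanism. It establishes the slice decomposition not just for the basis projections but for $UE_{jj}U^*\otimes B$ for \emph{every} unitary $U\in M_m$, with a conjugating unitary $W_U$; it then tests $\phi$ against $I_m\otimes S$ for real symmetric $S$ (which $\phi$ must fix) to force $W_U=V_U\otimes I_n$; it applies the partial trace $\tr_1$ to isolate the second-factor map $\varphi_{j,U}(B)$; and finally it uses connectedness: for a fixed $B$ with $B\ne B^t$, the map $xx^*\mapsto \tr_1(\phi(xx^*\otimes B))$ is continuous on the connected set of rank-one projections and takes values in the two-point set $\{B,B^t\}$, hence is constant, so all the local identity/transpose choices agree. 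Your slice decompositions do hold for every unit vector $y$ (and $x$), so this argument is available to you, but it is a substantive missing step, not a routine verification.
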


\begin{proof}
The sufficiency part is clear. We consider the necessity part.
Since $\sigma(\phi(I_m\otimes I_n)) = \sigma(I_m\otimes I_n) = \{1\}$, we see that
$\phi(I_m\otimes I_n) = I_m\otimes I_n$.
Consider any distinct pairs $(j,k)$ and $(r,s)$ for $j,r\in \{1,\ldots, m\}$,
$k,s\in \{1,\ldots, n\}$. Then $\phi(E_{jj}\otimes E_{kk})$ and $\phi(E_{rr}\otimes E_{ss})$
are nonzero orthogonal projections. Now, $I_{mn} = \phi(I_{mn}) = \sum_{j,k}\phi(E_{jj}\otimes E_{kk})$ has trace $mn$.
It follows that each $\phi(E_{jj}\otimes E_{kk})$ has rank one. Moreover,  $\phi(E_{jj} \otimes E_{kk})$ and $\phi(E_{rr} \otimes E_{ss})$
have disjoint range spaces for any distinct pairs $(j,k)$ and $(r,s)$. Hence, there exists a unitary $W\in M_{mn}$ such that
$$\phi(E_{jj} \otimes E_{kk}) = W ( E_{jj} \otimes E_{kk} )W^*$$ for all $1\le j \le m$ and $1\le k \le n$.

For any $B \in H_n$, $t \in \IR$, and $1\le j \le m$,  we have
\begin{eqnarray*} &&
\sigma \left(  \phi(E_{jj} \otimes B) + t \phi(E_{jj} \otimes I_n) \right)
= \sigma \left( \phi( E_{jj} \otimes (B+t I_n)) \right) \\
&=& \sigma \left( E_{jj} \otimes (B+t I_n) \right)
= \{b + t: b \in \sigma(B)\} \cup \{0\}.
\end{eqnarray*}
Since $\phi(E_{jj} \otimes I_n) = W(E_{jj} \otimes I_n)W^*$, applying Lemma \ref{le3.1} and using  permutation  similarity if necessary,
we have
$$\phi(E_{jj} \otimes B) = W(E_{jj} \otimes \psi_j(B))W^*$$
for some $\psi_j(B) \in H_n$. Furthermore, $B$ and $\psi_j(B)$
have the same spectrum. So $\psi_j$ has the form
$$B \mapsto   U_jBU_j^* \quad \hbox{ or } \quad B \mapsto U_jB^tU_j^* $$
for some unitary $U_j$.
Replace $W$ with $W(U_1\oplus\cdots \oplus U_m)$. Then
\begin{equation}\label{eq2}
\phi(E_{jj} \otimes B)= W(E_{jj} \otimes \varphi_j(B))W^*
\end{equation}
for all $1\le j \le m$ and $B \in H_n$,  where
each map $\varphi_j$ is the identity map or  the transposition map $X \mapsto X^t$.

Repeating the same argument, one can show that for any unitary $U \in M_m$,
$$\phi(UE_{jj}U^* \otimes B)= W_U(E_{jj} \otimes \varphi_{j,U}(B))W_U^*$$
for all $1\le j \le m$ and $B \in H_n$,
where $W_U\in M_{mn}$ is a unitary matrix, depending on $U$, and $\varphi_{j,U}$ is either the identity map or the transposition map,
depending on $j$ and $U$. Replacing $\phi$ by the map $A\mapsto W_{I_{mn}}^*\phi(A)W_{I_{mn}}$,
we may assume that $$W_{I_{mn}}= I_{mn} \quad {\rm and} \quad \phi(E_{jj}\otimes E_{kk})= E_{jj}\otimes E_{kk}$$
for all $1\le j \le m$ and $1\le k \le n$.
Now, for any real symmetric $S\in H_n$ and unitary $U \in M_m$, we have
$\varphi_{j,U}(S) = S$ for all $j = 1, \dots, m$, and, hence,
$$\phi(I_m \otimes S)
= \phi\left( \sum_{j=1}^m UE_{jj}U^* \otimes S \right)
= W_U \left(\sum_{j=1}^m E_{jj} \otimes S\right)W_U^*
= W_U \left( I_m \otimes S \right)W_U^*$$
for some unitary $W_U\in M_{mn}$. In particular, when $U = I_m$,
$\phi(I_m \otimes S) = I_m \otimes S$.
Thus, $W_U \left( I_m \otimes S \right)W_U^* = I_m \otimes S$.
It follows that
$W_U$ commutes with $I_m \otimes S$ for all real symmetric $S$.
Hence, $W_U$ has the form $V_U \otimes I_n$ for some $V_U \in M_m$ and
\begin{equation}\label{eq4}
\phi\left(UE_{jj}U^* \otimes B \right) = (V_UE_{jj}V_U^*) \otimes \varphi_{j,U}(B)
\end{equation}
for $1\le j \le m$ and $B \in M_m$.
Consider the linear maps $\tr_1: H_{mn} \to H_n$
and  $\Phi: H_{mn} \to H_n$ defined by
$$\tr_1(A\otimes B) = (\tr A)B \quad \hbox{ and } \quad
\Phi(A\otimes B) = \tr_1\left( \phi(A\otimes B) \right)$$
for any $A\otimes B \in H_m\otimes H_n.$
Then
\begin{equation*}
\Phi\left(UE_{jj}U^* \otimes B \right) = \varphi_{j,U}(B).
\end{equation*}
Recall that  a continuous image of a connected space is still connected. Since $\Phi$ is linear and continuous, $\{xx^*\in M_m :x^*x=1\}$ is connected, and $\varphi_{j,U}$ is either the identity map or the transposition map,
all the maps $\varphi_{j,U}$ have to be the same.
Replacing $\phi$ by the map  $A\otimes B \mapsto \phi(A\otimes B^t)$, if necessary,
we may assume that this common map is the identity map.
Next, by linearity, one can conclude that
for every $A \in H_m$ and $B \in H_n$ we have
$$\phi \left(A \otimes B \right) = \varphi_1(A) \otimes B$$
for some $\varphi_1(A) \in H_m$, where $\varphi_1(A)$ depends on $A$ only.
Note that $\varphi_1:H_m\to H_m$ is a linear map and $\sigma(\varphi_1(A)) = \sigma(A)$ for all $A \in H_m$.
Hence, by \cite{Marcus}, a map  $\varphi_1$ has the form $A\mapsto VAV^*$ or $A\mapsto VA^tV^*$.
The proof is completed.
\end{proof}


In the following, we consider linear maps on $H_{mn}$ leaving the spectral radius invariant.

\begin{theorem}
\label{2.2}
A linear map $\phi: H_{mn}\rightarrow H_{mn}$ satisfies
$$r(\phi(A\otimes B)) = r(A\otimes B)$$
for all $A\otimes B \in H_m\otimes H_n$
if and only if there is a unitary $U \in M_{mn}$ and $\lambda \in \{-1,1\}$ such that
$$\phi(A\otimes B)= \lambda U(\varphi_1(A)\otimes \varphi_2(B))U^*,$$
where  $\varphi_j$ is the identity map or the transposition map $X \mapsto X^t$
for $j \in \{1, 2\}$.
\end{theorem}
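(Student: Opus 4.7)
\medskip
\noindent\emph{Proof plan.}
The strategy is to reduce the claim to Theorem \ref{2.1} by establishing that, after possibly multiplying $\phi$ by $-1$, the map preserves the full spectrum on tensor products.

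The first and principal step is to show $X_I := \phi(I_m \otimes I_n) = \lambda I_{mn}$ for some $\lambda \in \{-1, 1\}$. Setting $Y_A := \phi(A \otimes I_n)$, linearity and the hypothesis give
\[
r(Y_A + tX_I) = r\bigl(\phi((A + tI_m) \otimes I_n)\bigr) = r(A + tI_m)
\]
for every $A \in H_m$ and $t \in \IR$. Since $r$ coincides with the operator norm on Hermitian matrices, matching asymptotics as $t \to \pm \infty$ forces $r(X_I) = 1$, $\lambda_{\max}(Y_A) = \lambda_{\max}(A)$, and $\lambda_{\min}(Y_A) = \lambda_{\min}(A)$. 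Taking $A = uu^* - u'u'^*$ for orthonormal unit vectors $u, u' \in \IC^m$ gives $r(A + tI_m) = 1 + |t|$, which saturates the triangle inequality $r(Y_A + tX_I) \le r(Y_A) + |t|\,r(X_I)$ for every $t \in \IR$. This saturation forces $Y_A$ and $X_I$ to share common extremal eigenvectors with matched $\pm 1$ eigenvalues. Varying $(u, u')$ and using bilinearity of the map $\phi \circ \otimes$ together with connectedness of the unit sphere in $\IC^m$, one concludes $X_I = \lambda I_{mn}$.

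After replacing $\phi$ by $\lambda\phi$ (which preserves the hypothesis), assume $\phi(I_m \otimes I_n) = I_{mn}$. The bilinear expansion
\[
\phi\bigl((A + \alpha I_m) \otimes (B + \beta I_n)\bigr) = \phi(A \otimes B) + \alpha\,\phi(I_m \otimes B) + \beta\,\phi(A \otimes I_n) + \alpha\beta\, I_{mn}
\]
together with $r = r(A + \alpha I_m)\,r(B + \beta I_n)$ provides a two-parameter family of spectral-radius identities. Analyzing the piecewise-linear structure of both sides in $(\alpha, \beta)$---in particular the location of the kinks, which on the right are determined by the eigenvalues of $A$ and $B$ and on the left by the eigenvalues of $\phi(A \otimes B)$, $\phi(I_m \otimes B)$, and $\phi(A \otimes I_n)$---upgrades the spectral-radius hypothesis to full spectrum preservation $\sigma(\phi(A \otimes B)) = \sigma(A \otimes B)$ on all tensors. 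Theorem \ref{2.1} then yields $\phi(A \otimes B) = U(\varphi_1(A) \otimes \varphi_2(B))U^*$, and re-incorporating $\lambda$ gives the claimed form.

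The principal obstacle is the first step: pinning down $X_I = \pm I_{mn}$ from only spectral-radius data requires the delicate triangle-equality analysis and connectedness argument outlined above. The upgrade from spectral radius to full spectrum in the second step is likewise subtle because generic shifts $A \otimes B + tI_{mn}$ are not tensors, so the hypothesis can only be exploited along directions compatible with the tensor structure.
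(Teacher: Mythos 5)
Your overall strategy---fix the sign and then reduce to Theorem \ref{2.1} by upgrading spectral-radius preservation to spectrum preservation---is the same as the paper's, but both of your intermediate steps have genuine gaps, and the second one cannot be repaired as stated. The function $(\alpha,\beta)\mapsto r\bigl((A+\alpha I_m)\otimes(B+\beta I_n)\bigr)=r(A+\alpha I_m)\,r(B+\beta I_n)$ depends only on $\lambda_{\max}$ and $\lambda_{\min}$ of $A$ and of $B$: its only kinks occur at $\alpha=-(\lambda_{\max}(A)+\lambda_{\min}(A))/2$ and $\beta=-(\lambda_{\max}(B)+\lambda_{\min}(B))/2$. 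The interior eigenvalues of $A$ and $B$ are invisible to the entire two-parameter family, so no analysis of its piecewise-linear structure can recover $\sigma(A\otimes B)$; at best you match extremal eigenvalues. To see the whole spectrum one must feed the hypothesis tensors other than shifts of a fixed $A\otimes B$. That is what the paper does: it tests $\phi$ on $E_{jj}\otimes(E_{kk}\pm E_{tt})$ and $(E_{jj}\pm E_{ss})\otimes D$ to force each $\phi(E_{jj}\otimes E_{kk})$ to be rank one with mutually disjoint supports, so that $\phi(E_{jj}\otimes E_{kk})=\mu_{jk}W(E_{jj}\otimes E_{kk})W^*$; it then uses $r\bigl(\phi(E_{jj}\otimes(VE_{kk}V^*+tI_n))\bigr)=1+t$ to get $\phi(E_{jj}\otimes B)=E_{jj}\otimes\varphi_j(B)$ with $\varphi_j$ a spectral-radius preserver of $H_n$ (classified in \cite{LC}), and only then recovers spectrum preservation on all tensors by linearity before invoking Theorem \ref{2.1}.

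Your first step also does not go through as written, although its conclusion $\phi(I_{mn})=\pm I_{mn}$ is true. The asymptotic matching of $r(Y_A+tX_I)$ with $r(A+tI_m)$ does not by itself give $\lambda_{\max}(Y_A)=\lambda_{\max}(A)$ when $X_I$ has both $+1$ and $-1$ in its spectrum: the $t\to+\infty$ behaviour mixes the largest eigenvalue of the compression of $Y_A$ to the $+1$-eigenspace of $X_I$ with the negative of the smallest eigenvalue of the compression to the $-1$-eigenspace. More seriously, saturation of the triangle inequality for a fixed $A=uu^*-u'u'^{\,*}$ only produces a common extremal unit eigenvector of $Y_A$ and $X_I$ for each sign of $t$; you give no argument that the eigenvectors obtained as $(u,u')$ varies span $\IC^{mn}$, which is what $X_I=\lambda I_{mn}$ requires. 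Connectedness can keep the sign constant along a continuous family but cannot supply the spanning. In the paper, $\phi(I_{mn})=\pm I_{mn}$ is an output, not an input: from the rank-one analysis one gets $\phi(I_{mn})=D\otimes I_n$ for a diagonal orthogonal $D$, and by the symmetric argument on the first factor also $\phi(I_{mn})=I_m\otimes\tilde D$, whence $\phi(I_{mn})=\pm I_{mn}$.
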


\begin{proof}
The sufficiency part is clear. For the converse, suppose that a linear map $\phi: H_{mn}\rightarrow H_{mn}$
preserves the spectral radius of tensor states and let $1\le j \le m$, $1\le k \le n$.
Then $\phi(E_{jj}\otimes E_{kk})$ has an eigenvalue in $\{1, -1\}$.
For $t \ne k$, we have $r(\phi(E_{jj} \otimes (E_{kk} \pm E_{tt}))) = 1$. This yields that every eigenvector of $\phi(E_{jj}\otimes E_{kk})$
corresponding to the eigenvalue $1$ or $-1$ lies in the kernel of $\phi(E_{jj} \otimes E_{tt})$. Since this is true for any pair of $k$ and $t$,
for any orthogonal diagonal matrix $D \in M_n$ at least $n$ eigenvalues of $\phi(E_{jj}\otimes D)$ lie in $\{1,-1\}$.
Since $r(\phi((E_{jj} \pm E_{ss})\otimes D)) = 1$ for any $j \ne s$, $1\le j,s\le m$, and any diagonal orthogonal matrix $D \in H_n$,
$\phi(E_{jj} \otimes D)$ and $\phi(E_{ss} \otimes D)$ have disjoint support and, hence, $\phi(E_{jj} \otimes D)$ has rank $n$.
It follows that all $\phi(E_{jj} \otimes E_{kk})$ must be rank one and
$\phi(E_{jj} \otimes E_{kk})$ and $\phi(E_{ss} \otimes E_{tt})$ have disjoint support
for any distinct $(j,k)$ and $(s,t)$.
Therefore, there is a unitary $W \in M_{mn}$ and $\mu_{jk} \in \{1, -1\}$ such that
$$\phi(E_{jj}\otimes E_{kk}) = \mu_{jk} W(E_{jj}\otimes E_{kk})W^*
\quad \hbox{for all } \quad 1\le j \le m, 1\le k \le n.$$
For the sake of the simplicity, suppose that $W = I_{mn}$
and
$\phi(E_{jj} \otimes I_n) = E_{jj}\otimes P_j$,
where $P_1, \dots, P_m\in H_n$ are diagonal orthogonal matrices.

For any unitary $V \in M_n$, applying the same arguments to $E_{jj}\otimes VE_{kk}V^*$, $1\le j \le m$, $1\le k \le n$,
we see that  $\phi(E_{jj} \otimes VE_{kk}V^*)$ has rank one with spectral radius $1$. If  $t > 0$, we have
$$r(\phi(E_{jj}\otimes (VE_{kk}V^* + tI_n))) = 1+t.$$ Thus, the eigenspace of the nonzero eigenvalue of $\phi(E_{jj} \otimes VE_{kk}V^*)$
must lie in the eigenspace of $\phi(E_{jj}\otimes I_n) = E_{jj}\otimes P_j$. Consequently, we see that $\phi(E_{jj}\otimes B) = E_{jj} \otimes \varphi_j(B)$
for any $B\in H_n$. Clearly, $\varphi_j$ preserves spectral radius on $H_n$ and, hence, by \cite{LC} it has the form
$$B \mapsto \xi YBY^* \quad \hbox{ or } \quad B \mapsto \xi YB^tY^*$$ for some $\xi \in \{1,-1\}$ and unitary $Y \in M_n$.
In particular, $\varphi_j(I_n) \in \{I_n, -I_n\}$. So, $\phi(I_{mn}) = D \otimes I_n$ for some diagonal orthogonal matrix $D \in M_m$.

By considering $UE_{jj}U^*\otimes E_{kk}$ for unitary $U\in M_m$ and using the same arugment
as in the last paragraph, one can show that $\phi(I_{mn}) = I_m \otimes \tilde D$ for some diagonal orthogonal matrix $\tilde D \in M_n$.
Since $\phi(I_{mn}) = I_m\otimes \tilde D = D \otimes I_n$, we conclude that  $\phi(I_{mn})=\pm I_{mn}$.
Without loss of generality, we may assume that $\phi(I_{mn})=I_{mn}$.
Thus, all $\mu_{jk}$ are equal to $1$, i.e.,
$$\phi(E_{jj} \otimes E_{kk}) = E_{jj} \otimes E_{kk}
\quad \hbox{for all}\quad  1 \le j \le m, 1\le k \le n.$$

For any $A\otimes B \in H_m \otimes H_n$, there are unitary $U \in M_m$ and $V \in M_n$ such that $UAU^*$ and $VBV^*$ are diagonal matrices.
Without loss of generality, we assume that
$A = \Diag(a_1,\dots,a_m)$ and
$B = \Diag(b_1,\dots,b_n)$.
Then
$$\phi(A\otimes B) = \phi\left( \left(\sum_{j=1}^m a_j E_{jj} \right) \otimes \left(\sum_{k=1}^n b_k E_{kk}\right)\right)= A\otimes B.$$
Thus, $\sigma(\phi(A\otimes B)) = \sigma(A\otimes B)$ and the result is followed by Theorem \ref{2.1}.
\end{proof}

\subsection{Multipartite systems}

In this section we will extend Theorem \ref{2.1} and Theorem \ref{2.2} to multipartite system $H_{n_1\cdots n_m}=H_{n_1}\otimes\cdots \otimes H_{n_m}$, $m\ge 2$.

\begin{theorem}
\label{3.1}
A linear map $\phi:\H\rightarrow \H$ satisfies $$\sigma(\phi(\A)) = \sigma(\A)$$ for all $\A \in \H$
if and only if there is a unitary $U \in \M$ such that
\begin{equation}
\label{form1}
\phi(\A)=U(\varphi_1(A_1)\otimes \cdots\otimes\varphi_m(A_m))U^*,
\end{equation}
where  $\varphi_j$ is the identity map or  the transposition map $X \mapsto X^t$ for $j \in \{1, \ldots, m\}$.
\end{theorem}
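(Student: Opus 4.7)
The plan is to proceed by induction on the number of factors $m$, with Theorem~\ref{2.1} providing the base case $m = 2$. For the inductive step, I will mimic the proof of Theorem~\ref{2.1}: first normalize via a unitary similarity so that $\phi$ fixes each standard rank-one diagonal projection, then use Lemma~\ref{le3.1} to peel off the first tensor factor, invoke the inductive hypothesis on the remaining $m - 1$ factors, and finally glue the pieces together with a connectedness argument.

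Since $\sigma(\phi(I)) = \{1\}$, we have $\phi(I_{n_1 \cdots n_m}) = I_{n_1 \cdots n_m}$. The $n_1 \cdots n_m$ images $\phi(E_{j_1 j_1} \otimes \cdots \otimes E_{j_m j_m})$ are nonzero orthogonal projections summing to $I$, hence each is rank one with pairwise orthogonal ranges. Composing $\phi$ with an appropriate unitary similarity (absorbing a basis permutation), we may assume $\phi(E_{j_1 j_1} \otimes \cdots \otimes E_{j_m j_m}) = E_{j_1 j_1} \otimes \cdots \otimes E_{j_m j_m}$ for every tuple $(j_1, \ldots, j_m)$.

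Next, fix $j_1 \in \{1, \ldots, n_1\}$ and set $B = A_2 \otimes \cdots \otimes A_m$. The spectrum of $E_{j_1 j_1} \otimes (B + tI_{n_2 \cdots n_m})$ consists of $\sigma(B) + t$ together with the appropriate number of zeros, and by normalization $\phi(E_{j_1 j_1} \otimes I_{n_2 \cdots n_m}) = E_{j_1 j_1} \otimes I_{n_2 \cdots n_m}$. Applying Lemma~\ref{le3.1} to $\phi(E_{j_1 j_1} \otimes (B + tI))$ forces $\phi(E_{j_1 j_1} \otimes B) = E_{j_1 j_1} \otimes \psi_{j_1}(B)$ with $\sigma(\psi_{j_1}(B)) = \sigma(B)$. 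Since $\phi$ is linear, $\psi_{j_1}$ extends uniquely to a linear map on $H_{n_2 \cdots n_m}$, and this extension preserves the spectrum of every tensor state in $H_{n_2} \otimes \cdots \otimes H_{n_m}$. The inductive hypothesis yields $\psi_{j_1}(A_2 \otimes \cdots \otimes A_m) = U_{j_1}(\varphi_{2, j_1}(A_2) \otimes \cdots \otimes \varphi_{m, j_1}(A_m)) U_{j_1}^*$ for some unitary $U_{j_1}$, where each $\varphi_{i, j_1}$ is the identity map or the transposition map.

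The main obstacle is the final gluing step: showing that, up to a single global unitary similarity, neither the $U_{j_1}$ nor the choices $\varphi_{i, j_1}$ depend on $j_1$. To this end, I would repeat the argument with $V E_{j_1 j_1} V^*$ in place of $E_{j_1 j_1}$ for arbitrary unitary $V \in M_{n_1}$, obtaining an analogous decomposition with unitary $W_{j_1, V}$ and transposition data $\varphi_{i, j_1, V}$. Forming the continuous map $A \mapsto \tr_1(\phi(A))$ and using connectedness of the orbit $\{xx^* : x \in \IC^{n_1}, x^*x = 1\}$, the discrete choices $\varphi_{i, j_1, V}$ must be independent of $(j_1, V)$, yielding a single sequence $(\varphi_2, \ldots, \varphi_m)$. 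Finally, the argument using $\phi(I \otimes S_2 \otimes \cdots \otimes S_m)$ for real symmetric $S_2, \ldots, S_m$ (fixed by both identity and transposition) forces the attached unitaries to assemble into a single unitary of tensor-product form, giving~\eqref{form1}.
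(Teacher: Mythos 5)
Your overall strategy (induction on $m$ with Theorem \ref{2.1} as the base case, normalization so that $\phi$ fixes the diagonal rank-one tensor projections, Lemma \ref{le3.1} to localize images, then a trace/connectedness argument to glue) is the same general template the paper follows, but the specific way you peel off a factor contains a genuine gap. You pin the first factor at $E_{j_1j_1}$, let $B = A_2\otimes\cdots\otimes A_m$ vary, and invoke the spectral hypothesis on $E_{j_1j_1}\otimes(B + tI_{n_2\cdots n_m})$. For $m\ge 3$ this matrix is \emph{not} of the form $A_1\otimes\cdots\otimes A_m$: the block $A_2\otimes\cdots\otimes A_m + tI_{n_2\cdots n_m}$ is not a tensor product of $m-1$ Hermitian matrices in general. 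The hypothesis $\sigma(\phi(C))=\sigma(C)$ is assumed only for genuine $m$-fold tensor states, and although linearity gives $\phi(E_{j_1j_1}\otimes(B+tI)) = \phi(E_{j_1j_1}\otimes B) + t\,\phi(E_{j_1j_1}\otimes I)$, the spectrum of a sum is not determined by the spectra of its summands, so you cannot conclude that this spectrum equals $\{b+t: b\in\sigma(B)\}\cup\{0,\dots,0\}$, and Lemma \ref{le3.1} cannot be applied. This is precisely the step where the choice of which factor to keep general matters.

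The paper avoids this by reversing the roles: it keeps the \emph{first} factor general, $B\in H_{n_1}$, and pins the remaining factors to rank-one projections, so the perturbed matrix $(B+tI_{n_1})\otimes E_{j_2j_2}\otimes\cdots\otimes E_{j_mj_m}$ stays inside the set of $m$-fold tensor states (the perturbation lives entirely inside the single factor $H_{n_1}$). This yields $\phi(B\otimes E_{j_2j_2}\otimes\cdots\otimes E_{j_mj_m}) = W\left(\varphi_{j_2,\dots,j_m}(B)\otimes E_{j_2j_2}\otimes\cdots\otimes E_{j_mj_m}\right)W^*$ with $\varphi_{j_2,\dots,j_m}$ the identity or transposition on the single algebra $H_{n_1}$; the trace and connectedness argument then makes these maps coincide, linearity gives $\phi(B\otimes A_2\otimes\cdots\otimes A_m) = \varphi(B)\otimes\psi(A_2\otimes\cdots\otimes A_m)$, and only at this last stage is the induction hypothesis applied, namely to the map $\psi$ on the $(m-1)$-partite system $H_{n_2}\otimes\cdots\otimes H_{n_m}$. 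If you want to retain your version, you would need to peel off one factor at a time, always perturbing by $tI_{n_p}$ inside a single tensor factor, rather than perturbing the whole block $A_2\otimes\cdots\otimes A_m$ at once.
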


\begin{proof}
The sufficiency part is clear. To prove the necessity part, we use induction on $m$. By Theorem \ref{2.1}, we already know that the statement
of Theorem \ref{3.1} is true for bipartite systems. So, assume that $m\ge 3$ and that the result holds for all $(m-1)$-partite systems.
We would like to prove that the same is true for $m$-partite systems.

As in the proof of Theorem \ref{2.1}, we can show that there exists a unitary $W\in M_{n_1\cdots n_m}$ such that
$$\phi(E_{j_1j_1} \otimes \cdots \otimes E_{j_mj_m}) = W ( E_{j_1j_1} \otimes \cdots \otimes E_{j_mj_m})W^*$$
for all $1\le j_p \le n_p$ with $1\le p \le m$. 
Moreover, for any  $B\in H_{n_1}$ and $1\le j_p \le n_p$ with $2\le p \le m$, we have
$$\phi(B \otimes E_{j_2j_2} \otimes \cdots \otimes E_{j_mj_m} )
= W(\psi_{j_2,\dots,j_m} (B) \otimes E_{j_2j_2} \otimes \cdots \otimes E_{j_mj_m})W^*$$
for some $\varphi_{j_2,\dots,j_m}(B)\in H_{n_1}$. Then $B$ and $\varphi_{j_2,\dots,j_m}(B)$
have the same spectrum.
By the fact that $\varphi_{j_2,\dots,j_m}(E_{kk}) = E_{kk}$ for all $1\le k \le n_1$,
the map $\varphi_{j_2,\dots,j_m}$ can be assumed either the identity map or the transposition map.
By a similar argument, we can show that
$$\phi\left( B \otimes \left( \bigotimes_{p=2}^m U_p E_{j_p j_p} U_p^* \right)  \right)
= W_{U_2,\dots, U_m} \left(\varphi_{j_2,\dots,j_m}^{U_2,\dots,U_m}(B) \otimes E_{j_2j_2} \otimes \cdots \otimes E_{j_mj_m}
 \right)W_{U_2,\dots, U_m}^*$$
for all $B \in H_{n_1}$ and $1\le j_p \le n_p$ with $2\le p \le m$, 
where $W_{U_2,\dots, U_m}\in M_{n_1\cdots n_m}$ is a unitary matrix depending on $U_2,\dots,U_m$ only 
and $\varphi_{j_2,\dots,j_m}^{U_2,\dots,U_m}$ is either the identity map or the transposition map, depending on 
$j_2,\dots,j_m$ and $U_2,\dots,U_m$.
Replacing $\phi$ by the map $A\mapsto W_{I_{n_2},\dots,I_{n_m}}^*\phi(A)W_{I_{n_2},\dots,I_{n_m}}$,
we may assume that 
$$W_{I_{n_2},\dots,I_{n_m}} = I_{n_1\cdots n_m} \quad \hbox{and}\quad
\phi\left( E_{j_1j_1} \otimes \cdots \otimes E_{j_mj_m} \right)= E_{j_1j_1} \otimes \cdots \otimes E_{j_mj_m}$$
for all $1\le j_p \le n_p$ with $1\le p \le m$.
Again, considering all symmetric $S\in H_{n_1}$  as in the proof of Theorem \ref{2.1}, 
we can show that there exists $V_{U_2,\dots,U_m} \in M_{n_2\cdots n_m}$ such that
$$\phi\left(B \otimes \left( \bigotimes_{p=2}^{m} U_p E_{j_p j_p} U_p^* \right)\right) 
= \varphi_{j_2,\dots,j_m}^{U_2,\dots,U_m}(B) \otimes 
V_{U_2,\dots, U_m} \left( E_{j_2j_2} \otimes \cdots \otimes E_{j_mj_m} \right) V_{U_2,\dots, U_{m}}^* .$$

Using the trace function, we see that all the maps $\varphi_{j_2,\dots,j_m}^{U_2,\dots,U_m}$ have to be the same. 
Assume that this common map is equal to $\varphi$, which is either the identity map or the transposition map.
By linearity, one can conclude that for any $A = A_2\otimes \cdots \otimes A_m\in H_{n_2\cdots n_m}$ and $B \in H_{n_1}$,
$$\phi \left(B \otimes A_2\otimes \cdots \otimes A_m \right) = \varphi(B) \otimes \psi(A_2\otimes \cdots \otimes A_m) $$ 
for some $\psi(A) = \psi_1(A_2\otimes \cdots \otimes A_m) \in H_{n_2\cdots n_m}$, where $\psi(A)$ depends on $A$ only.
Note that $\psi:H_{n_2\cdots n_m}\to H_{n_2\cdots n_m}$ is a linear map and $\sigma(\psi(A)) = \sigma(A)$ for all $A \in H_{n_2\cdots n_m}$.
Hence, by induction hypothesis, $\phi$ has the form (\ref{form1}), as desired. The proof is completed.
\end{proof}

\begin{theorem}
\label{3.2}
A linear map $\phi:\H\rightarrow \H$ satisfies $$r(\phi(\A)) = r(\A)$$ for all $\A \in \H$
if and only if there is a unitary $U \in \M$ and $\lambda \in \{-1,1\}$ such that
\begin{equation}
\label{form2}
\phi(\A)=\lambda U(\varphi_1(A_1)\otimes \cdots\otimes\varphi_m(A_m))U^*,
\end{equation}
where  $\varphi_j$ is the identity map or  the transposition map $X \mapsto X^t$ for $j \in \{1, \ldots, m\}$.
\end{theorem}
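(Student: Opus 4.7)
The plan is to mirror the proof of Theorem \ref{2.2} and to conclude by invoking the multipartite spectrum-preserving result Theorem \ref{3.1}. Sufficiency is immediate, so I concentrate on necessity.

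I would begin by examining the diagonal elementary tensors $E_{\mathbf j} := E_{j_1j_1}\otimes\cdots\otimes E_{j_mj_m}$. Since $r(\phi(E_{\mathbf j})) = 1$, $\phi(E_{\mathbf j})$ has an eigenvalue in $\{\pm 1\}$. For any coordinate $k$ and any $t_k \ne j_k$, the tensor obtained from $E_{\mathbf j}$ by replacing its $k$-th factor with $E_{j_kj_k} \pm E_{t_kt_k}$ is a tensor state of spectral radius one, so the $\pm 1$-eigenspace of $\phi(E_{\mathbf j})$ must lie in the kernel of $\phi(E_{\mathbf j'})$ for every $\mathbf j'$ differing from $\mathbf j$ in a single coordinate. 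Iterating across coordinates, and using spectral radius one of the tensors obtained by replacing a single factor $E_{j_kj_k}$ by an arbitrary diagonal $\pm 1$ matrix in $M_{n_k}$, one concludes as in Theorem \ref{2.2} that each $\phi(E_{\mathbf j})$ is rank one with eigenvalue in $\{\pm 1\}$ and that the range spaces for distinct $\mathbf j$ are pairwise orthogonal. Hence there exist a unitary $W \in \M$ and signs $\mu_{\mathbf j} \in \{\pm 1\}$ with $\phi(E_{\mathbf j}) = \mu_{\mathbf j}\, W E_{\mathbf j} W^*$; after replacing $\phi$ by $W^*\phi(\cdot)W$, I may assume $W = I$.

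Next, fixing all but one index and letting $B \in H_{n_k}$ vary, the unitary-orbit argument from the proof of Theorem \ref{2.2} together with the spectral radius preserver theorem on $H_{n_k}$ from \cite{LC} gives
$$\phi\!\left( E_{j_1j_1}\otimes\cdots\otimes B\otimes\cdots\otimes E_{j_mj_m} \right) = E_{j_1j_1}\otimes\cdots\otimes\varphi_{\mathbf j,k}(B)\otimes\cdots\otimes E_{j_mj_m},$$
where $\varphi_{\mathbf j,k}$ has the form $B\mapsto \xi YBY^*$ or $B\mapsto \xi YB^tY^*$ for some $\xi\in\{\pm 1\}$ and unitary $Y\in M_{n_k}$. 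In particular $\varphi_{\mathbf j,k}(I_{n_k}) \in \{\pm I_{n_k}\}$, so summing over $j_k$ produces, for each coordinate $k$, a description of $\phi(I_{n_1\cdots n_m})$ as an identity on every factor except the $k$-th, tensored with a diagonal $\pm 1$ matrix on the $k$-th. Requiring these $m$ descriptions to agree forces $\phi(I_{n_1\cdots n_m}) = \pm I_{n_1\cdots n_m}$; absorbing the global sign into $\lambda$, I may assume $\phi(I) = I$, whence all $\mu_{\mathbf j} = 1$ and $\phi$ fixes every diagonal elementary tensor. To finish, for an arbitrary tensor state $\A$ I would choose unitaries $U_k$ diagonalizing each $A_k$ and apply the reduction above to the conjugated map $\tilde\phi(X) = (U_1\otimes\cdots\otimes U_m)^*\,\phi((U_1\otimes\cdots\otimes U_m)X(U_1\otimes\cdots\otimes U_m)^*)\,(U_1\otimes\cdots\otimes U_m)$, which still preserves the spectral radius of tensor states; this yields $\sigma(\phi(\A)) = \sigma(\A)$ for every tensor state $\A$, and Theorem \ref{3.1} then supplies the form (\ref{form2}).

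The main obstacle, as already in the bipartite case, is the sign bookkeeping: in Theorem \ref{2.2} the uniformity of the signs $\mu_{jk}$ came from reconciling two factorizations of $\phi(I_{mn})$, whereas here one must reconcile $m$ such factorizations, one per tensor factor. A related subtlety is verifying that the local map $\varphi_{\mathbf j,k}$ — whose identity-versus-transposition type and sign could a priori depend on $\mathbf j$ and $k$ — stabilizes to a single common form along each factor, which I would settle with the trace-and-connectedness device used inside the proof of Theorem \ref{2.1}.
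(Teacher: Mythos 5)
Your proposal is correct and follows essentially the same route as the paper's proof: establish that the $\phi(E_{j_1j_1}\otimes\cdots\otimes E_{j_mj_m})$ are rank one with pairwise orthogonal ranges via the coordinate-by-coordinate eigenvalue count, invoke the spectral-radius preserver result of \cite{LC} on each factor to force the signs $\mu_{\mathbf j}$ to be independent of every index, normalize $\phi(I)=I$, extend to arbitrary tensor states by conjugating with $U_1\otimes\cdots\otimes U_m$, and finish by Theorem \ref{3.1}. The only differences are presentational (you conjugate the map where the paper tracks $W_{U_1,\dots,U_m}$ and $\mu_{U_1,\dots,U_m}$ explicitly, and your description of the two factorizations of $\phi(I)$ has the identity in the wrong tensor slot, though the reconciliation argument is unaffected).
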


\begin{proof}
The sufficiency part is clear. To prove the converse, 
by a similar argument as in Theorem \ref{2.2}, we can show that 
$\phi(E_{j_1j_1} \otimes \cdots \otimes E_{j_mj_m})$ has an eigenvalue in $\{1,-1\}$ for any index set $(j_1,\dots,j_m)$, where $1\le j_p \le n_p$ with $1\le p \le m$.
Next, one can show that 
for any orthogonal diagonal matrix $D_1 \in H_{n_1}$,
$\phi(D_1 \otimes E_{j_2j_2} \otimes \cdots \otimes E_{j_{m}j_{m}})$
has at least $n_1$ eigenvalues lying in $\{1,-1\}$.
Furthermore, for any orthogonal diagonal matrices $D_1 \in H_{n_1}$ and $D_2 \in H_{n_2}$,
$\phi(D_1 \otimes D_2 \otimes E_{j_3j_3} \otimes \cdots \otimes E_{j_{m}j_{m}})$
has at least $n_1n_2$ eigenvalues lying in $\{1,-1\}$.
Recurrently, one can show that 
for any orthogonal diagonal $D_p \in H_{n_p}$ with $1\le p \le m$,
$\phi(D_1 \otimes D_2 \otimes \cdots \otimes D_m)$
has $n_1n_2\cdots n_m$ eigenvalues lying in $\{1,-1\}$.
This is possible only when $\phi(E_{j_1j_1} \otimes \cdots \otimes E_{j_mj_m})$
is rank one and for any distinct index sets $(j_1,\dots,j_m)$ and $(k_1,\dots,k_m)$,
$\phi(E_{j_1j_1} \otimes \cdots \otimes E_{j_mj_m})$
and $\phi(E_{k_1k_1} \otimes \cdots \otimes E_{k_mk_m})$
have disjoint support. Therefore,
there is a unitary matrix $W \in M_{n_1\cdots n_m}$ and $\mu_{j_1,\dots,j_m} \in \{1,-1\}$ such that
$$\phi(E_{j_1j_1} \otimes \cdots \otimes E_{j_mj_m}) = \mu_{j_1,\dots,j_m} W (E_{j_1j_1} \otimes \cdots \otimes E_{j_mj_m}) W^*.$$
Suppose $P_{j_2,\dots,j_m}$ are diagonal orthogonal matrices such that
$$\phi(I_{n_1} \otimes E_{j_2j_2} \otimes \cdots \otimes E_{j_mj_m}) = 
W(P_{j_2,\dots,j_m} \otimes  E_{j_2j_2} \otimes \cdots \otimes E_{j_mj_m})W^*.$$
Since every rank one matrix $R \in H_{n_1}$ 
can be expressed as $UE_{11}U^*$
for some unitary $U \in M_{n_1}$,
using the same argument as above, one can show that
$\phi(R \otimes E_{j_2j_2} \otimes \cdots \otimes E_{j_mj_m})$
has rank one with spectral radius $1$ for all $1\le j_p\le n_p$ with $2 \le p \le m$. By considering
$$r (\phi(  (R + t I_{n_1}) \otimes E_{j_2j_2} \otimes \cdots \otimes E_{j_mj_m} )) = 1 + t \quad\hbox{for all } t > 0,$$
one can conclude that 
$\phi(  R \otimes E_{j_2j_2} \otimes \cdots \otimes E_{j_mj_m} ) = W(\psi_{j_2,\dots,j_m}(R) \otimes E_{j_2j_2} \otimes \cdots \otimes E_{j_mj_m})W^*$
and hence for any $B \in H_{n_1}$,
$$\phi(  B \otimes E_{j_2j_2} \otimes \cdots \otimes E_{j_mj_m} ) = W(\psi_{j_2,\dots,j_m}(B) \otimes E_{j_2j_2} \otimes \cdots \otimes E_{j_mj_m})W^*.$$
Clearly, $\psi_{j_2,\dots,j_m}$ preserves spectral radius on $H_{n_1}$ and, hence, has the form $$B \mapsto \xi YBY^* \quad \hbox{ or } \quad B \mapsto \xi YB^tY^*$$
for some $\xi \in \{1,-1\}$ and unitary $Y \in M_{n_1}$. Then, one can see that the scalar $\mu_{j_1,\dots,j_m}$
has to be independent of the first index $j_1$, i.e., $\mu_{j_1,j_2\dots,j_m} =\mu_{j_1',j_2,\dots,j_m}$ for any $1\le j_1,j_1'\le n_1$.
Applying the same argument on the $p$th subsystem for $p = 2,\dots,m$, 
one can deduce that $\mu_{j_1,\dots,j_m}$ is independent of the $p$th index $j_p$.
Therefore, $\mu_{j_1,\dots,j_m} = \mu_{k_1,\dots,k_m}$ for any the index sets $(j_1,\dots,j_m)$ and $(k_1,\dots,k_m)$
and hence $\mu_{j_1,\dots,j_m} = \mu$ is a constant. So
$$\phi(E_{j_1j_1} \otimes \cdots \otimes E_{j_mj_m}) = \mu W (E_{j_1j_1} \otimes \cdots \otimes E_{j_mj_m})W^*
\quad \hbox{for all}\quad  1\le j_p\le m \hbox{ with } 1\le p \le m.$$
By the same argument, one can show that for any unitary $U_p \in M_{n_p}$ with $1 \le p \le m$,
$$\phi(U_1E_{j_1j_1}U_1^* \otimes \cdots \otimes U_mE_{j_mj_m}U_m^*) = \mu_{U_1,\dots,U_m} W_{U_1,\dots,U_m} (E_{j_1j_1} \otimes \cdots \otimes E_{j_mj_m}) W_{U_1,\dots,U_m}^*$$
for all $1\le j_p \le n_p$ with $1\le p \le m$. 
Here the scalar $\mu_{U_1,\dots,U_m} \in \{1,-1\}$ and the unitary matrix $W_{U_1,\dots,U_m} \in M_{n_1\cdots n_m}$ 
depend on $U_1,\dots,U_m$ only.
Furthermore, summing up for all the indices $j_1,\dots,j_m$ yields
$\phi(I_{n_1\cdots n_m}) = \mu_{U_1,\dots,U_m} I_{n_1\cdots n_m}$.
So $\mu_{U_1,\dots,U_m} = \mu_{I_{n_1},\dots,I_{n_m}} = \mu$ is independent of the choice of $U_1,\dots,U_m$.
Without loss of generality, we may assume that $\mu = 1$. 
Then by linearity, $\sigma(\phi(A_1\otimes \cdots\otimes A_m)) = \sigma(A_1\otimes \cdots\otimes A_m)$ for all $A_1\otimes \cdots\otimes A_m \in H_{n_1}\otimes\cdots\otimes H_{n_m}$, 
and the result follows from Theorem \ref{3.1}.
\end{proof}

\subsection{Additional remarks and results}

Several remarks concerning our results in the last two subsections are in order.

First, in all previous study of linear preservers involving
tensor product spaces, one always imposed the assumption that the preservers send tensor states
to tensor states. As a result, the structure of the preservers have the  form
\begin{equation} \label{tensorform}
A\otimes B \mapsto \psi_1(A) \otimes \psi_2(B) \quad \hbox{ or } \quad
A\otimes B \mapsto \psi_2(B) \otimes \psi_1(A).
\end{equation}
In our case, we do not assume that
the preservers send tensor states to tensor states. Nevertheless, our results show that
up to a unitary similarity, we still have the form (\ref{tensorform}).

Second, we characterize linear operators $\phi$ such that $A\otimes B$ and $\phi(A\otimes B)$
have the same spectrum (respectively, spectral radius). The resulting map may not preserve the
spectrum (respectively, spectral radius) of a general matrix $C \in H_{mn}$.
For example, if
$C = E_{11}\otimes E_{11} + E_{22}\otimes E_{22} + E_{12} \otimes E_{12} + E_{21} \otimes E_{21}$,
then the map $\phi$ of the form $A\otimes B \mapsto A \otimes B^t$ for tensor states will preserve
the spectral radius (and spectrum) of tensor states, but $\phi(C)$ and $C$ will not have
the same spectral radius (and spectrum). One can easily extend the above observation to
the following.

\begin{theorem} Suppose $\phi: H_{n_1\cdots n_m} \rightarrow
H_{n_1\cdots n_m}$ is linear such that
$r(\phi(C)) = r(C)$ (respectively, $\sigma(\phi(C)) = \sigma(C)$)
for all $C = A_1 \otimes \cdots \otimes A_m$
with $A_j \in H_{n_j}$, $j = 1, \dots, m$, and for
$C$ obtained from  $I_{n_1} \otimes \cdots \otimes I_{n_m}$ by replacing
$I_{n_i}\otimes I_{n_{i+1}}$ with
$E_{11}\otimes E_{11} + E_{22}\otimes E_{22} + E_{12} \otimes E_{12} + E_{21} \otimes E_{21}$,
$i = 1, \dots, m-1$.
Then there are a unitary $U$ and $\xi \in \{1, -1\}$ (respectively, $\xi = 1$)
such that  $\phi$ has the form
$$X \mapsto \xi UXU^* \quad \hbox{ or } \quad X \mapsto \xi UX^tU^*.$$
\end{theorem}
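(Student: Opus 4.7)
The plan is to apply Theorem \ref{3.2} (or Theorem \ref{3.1}) to the tensor-state part of the hypothesis to reduce $\phi$, up to unitary conjugation and a global sign $\xi\in\{\pm 1\}$, to a map of the form
$$A_1\otimes \cdots\otimes A_m \;\mapsto\; \varphi_1(A_1)\otimes \cdots\otimes \varphi_m(A_m),$$
with each $\varphi_j$ either the identity or the transposition, and then to use the test matrices $C_i$ to force $\varphi_1 = \cdots = \varphi_m$.

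In more detail, write $\Phi$ for the linear extension to $\H$ of $A_1\otimes\cdots\otimes A_m \mapsto \varphi_1(A_1)\otimes\cdots\otimes\varphi_m(A_m)$, so that $\phi(X) = \xi U \Phi(X) U^*$ for every $X\in\H$ (with $\xi = 1$ already forced in the spectrum case by Theorem \ref{3.1}). Fix $i\in\{1,\ldots,m-1\}$ and consider
$$C_i = I_{n_1}\otimes\cdots\otimes I_{n_{i-1}}\otimes K \otimes I_{n_{i+2}}\otimes\cdots\otimes I_{n_m}, \quad K = \sum_{a,b\in\{1,2\}} E_{ab}\otimes E_{ab}.$$
Observing $K = vv^t$ with $v = e_1\otimes e_1 + e_2\otimes e_2$ shows that $K$ is rank one with nonzero eigenvalue $2$, hence $r(C_i) = 2$ and $\sigma(C_i)\subseteq\{0,2\}$. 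Applying $\Phi$ slot-by-slot gives
$$\Phi(C_i) = I_{n_1}\otimes\cdots\otimes \tilde K_i \otimes \cdots\otimes I_{n_m}, \qquad \tilde K_i = \sum_{a,b\in\{1,2\}} \varphi_i(E_{ab})\otimes \varphi_{i+1}(E_{ab}).$$
When $\varphi_i = \varphi_{i+1}$, a reindexing $a\leftrightarrow b$ in the transpose case identifies $\tilde K_i$ with $K$. When $\varphi_i \neq \varphi_{i+1}$, a direct $4\times 4$ computation yields $\tilde K_i = E_{11}\otimes E_{11} + E_{22}\otimes E_{22} + E_{12}\otimes E_{21} + E_{21}\otimes E_{12}$, whose spectrum is $\{1,1,1,-1\}$.

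Since $r(\phi(C_i)) = r(\Phi(C_i)) = r(\tilde K_i)$, the hypothesis $r(\phi(C_i)) = r(C_i) = 2$ (and, in the spectrum case, the analogous comparison of $\{1,1,1,-1\}$ with the spectrum of $C_i$) rules out the mismatched possibility. Iterating on $i$ yields $\varphi_1 = \cdots = \varphi_m$, so $\Phi$ is either $\Id$ on $\H$ or the full transposition $X\mapsto X^t$, and the two stated forms for $\phi$ follow. The only real checkpoint is the spectrum computation for the mismatched $\tilde K_i$; everything else is bookkeeping enabled by Theorems \ref{3.1} and \ref{3.2}.
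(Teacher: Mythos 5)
Your proposal is correct and is essentially the proof the paper intends (the paper omits it, remarking only that the bipartite example extends): reduce via Theorems \ref{3.1} and \ref{3.2} to the form $\xi U(\varphi_1(A_1)\otimes\cdots\otimes\varphi_m(A_m))U^*$, then note that the extra test matrices map to the swap operator (spectral radius $1$, spectrum $\{1,-1,0\}$) precisely when $\varphi_i\neq\varphi_{i+1}$, forcing all $\varphi_j$ to coincide. Your rank-one computation $K=vv^t$ with $v=e_1\otimes e_1+e_2\otimes e_2$ and the identification of the mismatched image with the swap are both right.
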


Third, one may consider affine maps $\psi$
on the set of density matrices in $H_N  = H_{n_1} \otimes \cdots \otimes H_{n_m}$
instead of linear maps on $H_N$. One may extend an
affine map on density matrices in $H_N$ in the standard way, namely, define
for any positive semi-definite matrix $C$,
$\phi(tC) = t\phi(C)$, and $\phi(C) = \psi(C)$ if $\tr C = 1$.
Then use the fact that every $X \in H_N$ is a difference of
two positive semi-definite $C_1$ and $C_2$,
and that $\phi(C_1) - \phi(C_2) = \phi(D_1) - \phi(D_2)$
if $C_1 - C_2 = D_1 - D_2$.

Finally, it is interesting to study (real or complex)
linear maps $\phi: M_{m} \otimes M_n \rightarrow M_{m}\otimes M_n$
such that $A\otimes B$ and $\phi(A\otimes B)$ always have the same spectrum (respectively,
spectral radius).

\bigskip
\noindent{\bf Acknowledgment}

This research was supported by a Hong Kong RCG grant PolyU 502910 with Sze as PI and Li as co-PI.
The grant also supported the post-doctoral fellowship of Huang
and the visit of Fo\v{s}ner to the Hong Kong Polytechnic University in the spring of 2011. She gratefully acknowledged the support and kind hospitality from the host university.
Li was also supported by a USA NSF grant; this research was done when he was 
a visiting professor of the University of Hong Kong in the spring of 2012; furthermore,
he is an honorary professor of Taiyuan University of Technology (100 Talent Program scholar),
and an honorary professor of the  Shanghai University.


\baselineskip16pt


\end{document}